\documentclass[10pt,conference]{IEEEtran}
\usepackage{amsfonts}
\usepackage{amsmath}
\usepackage{amssymb}
\usepackage{amsthm}
\usepackage{lscape}
\usepackage{epsf}
\usepackage{graphicx}
\usepackage{verbatim}
\usepackage{color} 
\usepackage{bbold}
\usepackage{tikz}
\usetikzlibrary{arrows}
 \usepackage{nopageno}

\pagestyle{plain}
\newtheorem{theorem}{\indent Theorem}[section]
\newtheorem{lemma}[theorem]{\indent Lemma}
\newtheorem{corollary}[theorem]{\indent Corollary}

\newtheorem{proposition}[theorem]{\indent Proposition}

\newtheorem{question}[theorem]{\indent Question}
\newtheorem{EXAMPLE}{\indent Example}[section]
\newtheorem{definition}{\indent Definition}[section]

\newenvironment{example}{\begin{EXAMPLE}\rm}{\rm\end{EXAMPLE}}

\newcommand{\es}{{\varepsilon}}

\newcommand{\nn}{{\mathbb N}}

\newlength{\Algwidth}

\title{Permutation codes, source coding and a generalisation of Bollob\'as-Lubell-Yamamoto- Meshalkin and Kraft inequalities}
\author{{\bf Kristo Visk and Ago-Erik Riet} \\ 
Institute of Mathematics and Statistics \\
Faculty of Science and Technology,
University of Tartu, J. Liivi 2, Tartu 50409, Estonia}
%\thanks{
%\author{Vitaly Skachek,~\IEEEmembership{Member,~IEEE}}
\begin{document}
%\pagenumbering{empty}

% \textbf{Keywords:} Pseudocodewords; Fundamental cone; Linear-programming decoder.

\maketitle

\begin{abstract}
We develop a general framework to prove Kraft-type inequalities for prefix-free permutation codes for source coding with various notions of permutation code and prefix. We also show that the McMillan-type converse theorem in most of these cases does not hold, and give a general form of a counterexample. Our approach is more general and works for other structures besides permutation codes. The classical Kraft inequality for prefix-free codes as well as results about permutation codes follow as corollaries of our main theorem and main counterexample. 
\end{abstract}

\begin{keywords}
Permutation codes, source coding, Kraft inequality, LYM inequality, prefix-free, Ulam distance, subsequence-free, permutation pattern. 
\end{keywords}

\section{Introduction}
Non-volatile memory is a type of computer memory that can store information after the device has been turned off. Flash memory is a type of non-volatile storage device. Data is stored onto a flash memory device by injecting charges into memory cells. It is possible to increase the level of charge of a particular cell, but to decrease the level of charge it is required to erase and overwrite a large block of cells.

Over time the drift of  electrical charges in memory cells might occur. Drift may occur at different rates for different cells, which makes sustaining required charge levels difficult, because charge levels in every cell would have to be monitored separately. Moreover, while increasing the charges, some cells might get overcharged, resulting in an overshoot error. Since reducing the charge levels is a complex process, the reliability of flash memory devices decreases over time.

To manage charge levels in memory cells more efficiently, multi-level memory cells are used. Single-level memory cells at any given time are either charged or empty, while in a multi-level cell system the charge of an individual cell can have more than two different levels. If the charge levels of cells in a block of cells are different, i.e.~arranged as a permutation, the drift and overshoot errors become easier to detect. By coding each block of cells into permutations and managing injections of charges it is possible to reduce drift and overshoot errors, using the rank modulation scheme proposed in~\cite{rank modulation}. Permutation codes for error correction for flash memories which provide further robustness have been studied for example in~\cite{Blake19791, gad, 5485013, bmz, farnoud2013, gologlu2015}. Permutation codes were also proposed for use in powerline communications, see for example~\cite{slepian}.

In this work, we study a question of unique decoding of permutation codes. This question is important for understanding theoretical limits for optimal (source) coding of information stored in flash memories. 

To this end, we prove a generalisation of the Bollob\'as-Lubell-Yamamoto-Meshalkin (LYM)~\cite{bol1965, lubell, meshalkin, yamamoto} and Kraft~\cite{kraft} inequalities, which works for certain graded posets. As corollaries of our main theorem we obtain that a Kraft-type inequality holds for prefix-free permutation codes in different contexts, where we give several definitions of permutation codes and several definitions of what it means to be a `prefix'. As corollaries to our general counterexample, we obtain that a McMillan-type converse theorem fails in most of these cases, but not for the case of the classical notion of prefix. 

\section{Notation}
Denote $[k]:=\{1, 2, \dots ,k\}$, let $\mathbb{N}:=\{1,2, \dots \}$ and $\mathbb{N}_0:=\{0\}\cup \mathbb{N}$. For a set $S$, its cardinality is denoted $\# S$, sometimes also written $|S|$.

An \emph{alphabet} is a finite set $S:=\{s_{i}\mid i\in [n]\}$ of cardinality $n\neq 0$. A \emph{symbol} is an element $s \in S$, a finite sequence of symbols $s_{1} \dots s_{k}$  is a \emph{string}. The \emph{length} of a string is the number of symbols it consists of. Let  
$
S^{l}:=\{s_{1} \dots s_{l}\mid s_{j}\in S \mbox{ for all } j\in [l]\} 
$
denote the set of strings of length $l$. Define $S^0 := \{\es\}$ where $\es$ is the unique string of length 0 called the \emph{empty string}. Let
$
S^{*}:=\bigcup_{j\in\mathbb{N}_0}{S^{j}}= S^{0} \cup S^{1} \cup  \dots 
$
be the set of all finite strings. For strings $t,u\in S^*$ where $t=s_{i_1}s_{i_2}\ldots s_{i_k} \in S^k$ and $u=s_{j_1}s_{j_2}\ldots s_{j_r} \in S^r$ their \emph{concatenation} is $tu=s_{i_1}\ldots s_{i_k} s_{j_1} \ldots s_{j_r} \in S^{k+r}$. A string $t$ is a \emph{prefix} to the string $u$ if $u=tw$ for some string $w$. If $w \not= \epsilon$, then $t$ is called a \emph{proper prefix} of $u$. 

 A \emph{permutation} of the set $[k]$ is a bijection $\sigma : [k] \rightarrow [k]$. To denote permutations $\sigma: i\mapsto b_{i}$ $ (i \in [k])$ we use vectors $(b_{1}, \dots ,b_{k})$ as well as strings $b_{1} \dots b_{k}$. We also denote $\mathbb{S}_{k}:=\{\sigma\mid\sigma: [k] \rightarrow [k] \mbox{ is a bijection} \}$ for the set (in fact, group) of permutations on $[k]$. For example $\sigma=2314=(2,3,1,4)\in \mathbb{S}_4$ is the permutation $\sigma(1)=2,\; \sigma(2)=3,\; \sigma(3)=1,\;\sigma(4)=4$. %{\color{red} For permutations $\sigma,\tau \in \mathbb{S}_k$ their \emph{composition} is $\sigma \circ \tau : i \mapsto \sigma(\tau(i))$. The set $\mathbb{S}_k$ forms a group under composition, called the \emph{symmetric group} (of degree $k$).}

We use notation
$$
T_{k}^l:=\{(n_{1}, \dots ,n_{l})\mid \forall j: n_{j}\in [k] ; \; i \neq j \Rightarrow n_{i}\neq n_{j}\}
$$
for the set of $l$-element \emph{partial permutations} on the set $[k]$, i.e.~the injective mappings $\tau : [l] \rightarrow [k]$. Let $$T_{k}:=T_k^1 \cup T_k^2 \cup \ldots \cup T_k^k$$
and write
$$
\mathbb{T}_{k}:=\{(n_{1}, \dots ,n_{l})\mid \forall j: n_{j}\in [l], l\leq k ; \; i \neq j \Rightarrow n_{i}\neq n_{j}\}
$$
$$=\mathbb{S}_1\cup \mathbb{S}_2 \cup \ldots\cup\mathbb{S}_k.$$ We say $\sigma\in \mathbb{S}_l$ is \emph{the pattern of} $\tau\in T_k^l$ if the relative ordering of symbols is the same, i.e.~if, for all $i,j\in [l],$ we have $\sigma(i) < \sigma(j)$ if and only if $\tau(i) < \tau(j)$. A permutation $\sigma \in \mathbb{S}_l$ is \emph{a pattern in} a partial permutation $\tau\in T_k^m$ if there are indices $1\leq b_1 < b_2 < \ldots < b_l \leq m$ such that, for all $i,j\in [l],\;\;\sigma(i) < \sigma(j)$ if and only if $\tau(b_i) < \tau(b_j)$. We call $\sigma\in T_k^m$ a (not necessarily consecutive) \emph{subsequence} of $\tau\in T_k^l$ if $m\leq l$ and there are $m$ indices $1\leq b_1 < b_2 < \ldots < b_m \leq l$ with $\sigma(i) = \tau(b_i)$ for all $i \in [m]$. Note that $\sigma$ is a pattern in $\tau\in T_k$ if and only if it is the pattern of a subsequence of $\tau$. 

For example $253\in T_6^3$ is a subsequence in $\mathbf{25}1\mathbf{3} \in T_6^3$, the pattern of $253\in T_6^3$ is $132\in \mathbb{S}_3$ and thus $132\in \mathbb{S}_3$ is a pattern in $2513 \in T_6^3$. 

We call $\sigma\in T_k^m$ a (consecutive) \emph{substring} of $\tau\in T_k^l$ if $m\leq l$ and there is $0\leq n \leq l-m$ with $\sigma(i) = \tau(i+n)$ for all $i \in [m]$. If $\sigma$ is a substring of $\tau$ then it is also a subsequence. 

For example $51\in T_6^2$ is a substring in $2\mathbf{51}3 \in T_6^4$ but the subsequence $253\in T_6^3$ is a not a substring in $\mathbf{25}1\mathbf{3} \in T_6^4$. 

We say $\sigma \in \mathbb{S}_m$ is a \emph{substring pattern} of $\tau \in T_k$ if it is the pattern of a substring of $\tau$. So $21\in \mathbb{S}_2$ is a substring pattern in $2\mathbf{51}3 \in T_6^4$ since it is the pattern of the substring $51\in T_6^2$.

\section{Codes, their prefix-freeness and unique decodability}
Let  $S_{1}$ and $S_{2}$ be alphabets. A (non-singular, classical) \emph{code} is an injective map $c : S_{1} \rightarrow S_{2}^{*}$. An image $c(s)$ is called a \emph{codeword} corresponding to symbol $s \in S_1$.
The definition of a code extends to all strings as $c(s_{1} \dots s_{n}):=c(s_{1}) \dots c(s_{n})$. It is easily checked the new map $c : S_1^* \rightarrow S_2^*$, called the \emph{extension} of the code $c$, is well-defined. A code $c : S_{1} \rightarrow S_{2}^{*}$ is \emph{uniquely decodable} if its extension is injective, i.e.~if $c(s_{1} \dots s_{n}) = c(s'_{1}\dots s'_{m})$ implies $m=n$ and $s_{j} = s'_{j}$ for all $j \in [n]$. A code $c : S_{1} \rightarrow S_{2}^{*}$ is \emph{prefix-free} if there do not exist $s_i,s_j\in S_1$ with $c(s_i)$ a proper prefix of $c(s_j)$. 

It is easy to see that a prefix-free code is uniquely decodable: we can read symbols in an output string $c(s_{1} \dots s_{n})=c(s_{1}) \dots c(s_{n})$ from left to right, and prefix-freeness guarantees that no proper prefix of $c(s_1)$ is a codeword and also that $c(s_1)$ is not a proper prefix to any codeword. Hence, encountering the substring $c(s_1)$ at the beginning of the output $c(s_{1} \dots s_{n})$, we are guaranteed that it arose by encoding the symbol $s_1$. We then continue by decoding $c(s_{2} \dots s_{n})$ similarly. Because of this property, prefix-free codes are also called \emph{instantaneous}, see for example~\cite{cover} (Ch.~5).

Let $c : S_1 \rightarrow S_2^*$ be a code. Then the sequence $(a_0,a_1,\ldots)$, where $a_j := \#\{s\in S_1 \mid c(s) \in S_2^j\}, \; j\in \nn_0$, is the \emph{parameter sequence} of the code $c$. The parameters count the codewords of each length.

We shall now state the Kraft inequality~\cite{kraft, mcmillan}, see also~\cite{cover} (Ch.~5) from classical source coding, which holds for all uniquely decodable classical codes, in particular for all prefix-free classical codes.

\begin{proposition}\label{kraft}
Let $c : S_1 \rightarrow S_2^*$ be a uniquely decodable classical code with parameter sequence $(a_0,a_1,\ldots)$ and $\# S_2 = r$. Then
\begin{equation} \label{kraft eqn} K_c := \sum_{i=0}^\infty \frac{a_i}{{r^i}} \leq 1.\end{equation}
\end{proposition}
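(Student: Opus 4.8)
The plan is to prove the classical Kraft inequality for uniquely decodable codes via the standard "product/counting" argument (often attributed to Karush), which avoids the need for the stronger prefix-free structure and works directly from unique decodability. The key idea is to raise the Kraft sum $K_c$ to a large power $N$, expand it combinatorially, and interpret each term as counting strings of a fixed length in $S_2^*$; unique decodability then bounds the number of such strings, yielding a bound on $K_c^N$ that forces $K_c \le 1$ as $N \to \infty$.

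**First I would** fix an integer $N \ge 1$ and consider $K_c^N = \left(\sum_{i=0}^\infty \frac{a_i}{r^i}\right)^N$. Expanding this product, we obtain
\begin{equation}
K_c^N = \sum_{i_1, \dots, i_N \ge 0} \frac{a_{i_1} a_{i_2} \cdots a_{i_N}}{r^{i_1 + i_2 + \cdots + i_N}}.
\end{equation}
Here the product $a_{i_1}\cdots a_{i_N}$ counts the number of ways to choose an ordered $N$-tuple of codewords $(c(s_1), \dots, c(s_N))$ with $c(s_j) \in S_2^{i_j}$, and concatenating them gives the codeword $c(s_1 \dots s_N)$ of the extension, which is a string of length $L := i_1 + \cdots + i_N$. **The crucial step** is to regroup the sum by the total length $L$: writing $b_L$ for the number of ordered $N$-tuples of source symbols whose concatenated encoding has length exactly $L$, we get $K_c^N = \sum_{L} \frac{b_L}{r^L}$.

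**The main obstacle**, and the heart of the argument, is bounding $b_L$. Here I would invoke unique decodability: since the extension $c : S_1^* \to S_2^*$ is injective, distinct $N$-tuples $(s_1, \dots, s_N)$ yield distinct output strings $c(s_1 \dots s_N) \in S_2^L$. Hence $b_L$ is at most the total number of strings of length $L$ over $S_2$, namely $b_L \le r^L$, so each summand $\frac{b_L}{r^L} \le 1$. Letting $L_{\max}$ denote the length of the longest codeword (if codewords have bounded length; otherwise one truncates the sum and the argument still goes through in the limit), the length $L$ ranges over $\{0, 1, \dots, N L_{\max}\}$, giving $K_c^N \le N L_{\max} + 1$.

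**To finish**, I would take $N$-th roots and let $N \to \infty$: from $K_c \le (N L_{\max} + 1)^{1/N}$ and the fact that $(N L_{\max} + 1)^{1/N} \to 1$ (since any polynomial in $N$ has $N$-th root tending to $1$), we conclude $K_c \le 1$. The one technical point to handle carefully is the case of unbounded codeword lengths, where $L_{\max}$ is infinite; there I would apply the argument to the partial sums $K_c^{(m)} := \sum_{i=0}^m a_i/r^i$, obtaining $K_c^{(m)} \le 1$ for every $m$ (the codewords involved now have length at most $m$, so $L \le Nm$ and the polynomial bound applies), and then take $m \to \infty$ by monotone convergence. I expect the regrouping-by-length bookkeeping to be the only genuinely delicate step; the injectivity bound $b_L \le r^L$ is the conceptual core and follows immediately from the definition of unique decodability given in the excerpt.
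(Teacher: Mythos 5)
Your proof is correct: it is the classical counting argument of McMillan~\cite{mcmillan} (in its streamlined Karush form), and every step checks out --- the expansion of $K_c^N$, the regrouping by total length $L$, the injectivity bound $b_L \le r^L$ (distinct $N$-tuples of symbols are distinct strings in $S_1^*$, so the extension separates them), the polynomial bound $K_c^N \le N L_{\max}+1$, and the truncation to partial sums for unbounded codeword lengths (a restriction of a uniquely decodable code is again uniquely decodable, so each partial sum is at most $1$). It is, however, a genuinely different route from the paper's. The paper never proves Proposition~\ref{kraft} directly: it quotes it as known from~\cite{kraft, mcmillan, cover}, and what its own machinery --- Theorem~\ref{graded}, the LYM-type inequality for antichains in level-regular graded posets, proved by the shadow/degree-counting Lemma~\ref{local lym} --- later recovers as a corollary is only the prefix-free (and subsequence-free, substring-free) case, since a prefix-free code is exactly an antichain in the prefix poset on $S_2^*$. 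Your argument is stronger in that direction: it uses the full uniquely-decodable hypothesis, which the antichain framework cannot reach, because a uniquely decodable code need not be an antichain in any of the paper's orders (e.g.\ the suffix-free code $\{0,\,01,\,11\}$ is uniquely decodable although $0$ is a prefix of $01$). Conversely, the paper's approach buys generality in a different direction: your power trick depends essentially on the monoid structure of $S_2^*$ --- concatenation adds lengths, and distinct tuples of codewords concatenate to distinct strings --- and this structure has no analogue in the permutation, pattern and subsequence posets that are the paper's real objects of interest, whereas the antichain/LYM argument applies uniformly to every level-regular graded poset. So the two proofs are complementary: yours gives the sharpest form of the classical statement, the paper's gives the transferable form.
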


The number $K_c$ is the \emph{Kraft number}, also known as the Kraft sum or the Kraft-McMillan number. Note that Proposition~\ref{kraft} states that the sum of densities of used codewords of a fixed length, over all lengths, is at most 1. 

McMillan~\cite{mcmillan} proved the following strong converse of Proposition~\ref{kraft}, known as (the converse part of) McMillan Theorem. We remark that its proof is the construction of a code by picking vertices in the $r$-ary code tree greedily, starting with vertices closer to the root and going in the lexicographic order.

\begin{proposition}\label{mcmillan}
Let $r\in \mathbb{N}$. If non-negative integers $a_i\in \mathbb{N}_0 \cup \{0\},\; i\in \mathbb{N}_0$, are such that inequality~(\ref{kraft eqn}) holds, then there exists a prefix-free code $c : S_1 \rightarrow S_2^*$ with $\# S_2 = r$ and parameter sequence $(a_0,a_1,\ldots)$. 
\end{proposition}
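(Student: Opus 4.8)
The plan is to realise codewords as the nodes of the complete infinite $r$-ary rooted tree, where the root is the empty string $\es$ and the $r$ children of a node $u\in S_2^*$ are the strings $us$ with $s\in S_2$. Under this identification a string of length $i$ sits at depth $i$, there are exactly $r^i$ nodes at depth $i$, and $t$ is a prefix of $u$ precisely when $t$ is an ancestor of (or equal to) $u$. Thus a set of codewords is prefix-free if and only if it is an antichain for the ancestor--descendant order, i.e.~no chosen node lies on the root-to-node path of another. I would build such an antichain greedily, processing depths in increasing order $i=0,1,2,\dots$ and, at depth $i$, selecting $a_i$ nodes (say the $a_i$ lexicographically smallest available ones, matching the remark preceding the statement) that are not descendants of any node already chosen at a smaller depth.

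First I would make precise what ``available'' means and count. A node chosen at depth $j<i$ has exactly $r^{i-j}$ descendants at depth $i$, and the descendant sets of distinct chosen nodes are disjoint because those nodes form an antichain. Hence the number of depth-$i$ nodes blocked by the previously chosen codewords is $\sum_{j=0}^{i-1} a_j r^{i-j}$, and the number of available depth-$i$ nodes is $r^i - \sum_{j=0}^{i-1} a_j r^{i-j}$.

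The crux is to verify that this quantity is always at least $a_i$, so that the greedy step never gets stuck. This is where the Kraft hypothesis enters: the inequality $r^i - \sum_{j=0}^{i-1} a_j r^{i-j} \geq a_i$ becomes, after dividing by $r^i$ and rearranging, exactly the truncated-sum bound $\sum_{j=0}^{i} a_j/r^j \leq 1$. Since every term $a_j/r^j$ is non-negative, each partial sum is bounded above by the full Kraft sum, which is at most $1$ by assumption~(\ref{kraft eqn}); so the required inequality holds at every depth. I expect this counting identity --- translating ``there is enough room at depth $i$'' into a truncation of the Kraft sum --- to be the main and essentially only substantive step; the remainder is bookkeeping.

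Finally I would check that the constructed set is genuinely prefix-free with the prescribed parameter sequence. Nodes selected at the same depth are automatically incomparable (equal depth, so neither is an ancestor of the other); a newly selected depth-$i$ node is not a descendant of any earlier node by the availability condition; and it cannot be an ancestor of an earlier node since it is strictly deeper. Hence the union over all depths is an antichain, and by construction it contains exactly $a_i$ strings of length $i$ for each $i$, giving parameter sequence $(a_0,a_1,\dots)$. Taking $S_1$ with one symbol per selected node and letting $c$ send it to the corresponding string then yields the desired prefix-free code.
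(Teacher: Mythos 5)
Your proof is correct and takes essentially the same approach as the paper, which only remarks that the proof is ``the construction of a code by picking vertices in the $r$-ary code tree greedily, starting with vertices closer to the root and going in the lexicographic order.'' Your write-up supplies exactly the details behind that sketch: the disjointness of descendant sets of incomparable tree nodes, and the observation that the availability condition at depth $i$ is precisely the truncated Kraft sum $\sum_{j=0}^{i} a_j/r^j \leq 1$.
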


\section{Permutation codes}
We shall define permutation codes in two ways, by restricting the output space of classical codes.

\begin{definition} Let $S$ be an alphabet and $k\in \mathbb{N}$. We define a \emph{permutation code} as an injection $c : S \rightarrow T_k$. Note that 
$$ \# T_k = \sum_{l=1}^k {k\choose l} l!.$$
We have $\# S \leq \sum_{l=1}^k {k\choose l} l!$ because of the injectivity of $c$. The \emph{parameter sequence} of the code is $(a_0,a_1,\ldots )$, where $a_j := \# \{ s\in S \mid c(s) \in T_k^j\}$, as for classical codes. 
\end{definition}

\begin{definition}
A more restrictive definition of a \emph{permutation code} is an injection $c : S \rightarrow \mathbb{T}_k$. Note that 
$$ \# \mathbb{T}_k = \sum_{l=1}^k l!.$$ We have $\# S \leq \sum_{l=1}^k l!$ because of the injectivity of $c$. The \emph{parameter sequence} of the code is then $(a_0,a_1,\ldots)$ where $a_j := \# \{ s\in S \mid c(s) \in \mathbb{S}_j\}$.
\end{definition}

\subsection{Definitions of `prefix-freeness' for permutation codes}
Let us define the \emph{permutation constant} of a permutation code $c$ in these cases respectively as $$ P_c := \sum_{l=1}^k \frac{a_l}{{k\choose l}l!},\;\;\; \text{or,}\;\;\; \mathbb{P}_c := \sum_{l=1}^k \frac{a_l}{l!} .$$ This is the analogue of the Kraft number from classical codes. 

The extension, prefix-freeness and unique decodability of a permutation code $c : S \rightarrow T_k$ or $c : S \rightarrow \mathbb{T}_k$ is understood as that notion for the same code viewed as a code $c : S \rightarrow [k]^*$. 

Now we give some notions analogous to prefix-freeness. A permutation code $c : S \rightarrow T_k$ or $c : S \rightarrow \mathbb{T}_k$ is \emph{\{subsequence-, substring-, pattern- or substring-pattern-\}free} if there are no two different codewords $c(s_1) \not= s(s_2)$ such that $c(s_1)$ is respectively a \{subsequence, substring, pattern or substring pattern\} in $c(s_2)$. These notions can also be defined for classical codes $c : S_1 \rightarrow S_2^*,$ with the subsequence- and substring-freeness being perhaps the more natural notions.

We shall see that often for prefix-free, subsequence-free, substring-free, pattern-free or substring-pattern-free permutation codes, $P_c \leq 1$, or, $\mathbb{P}_c \leq 1$, i.e.~the analogue of Proposition~\ref{kraft} holds. That is, the sum of densities of used codewords of fixed length, over all lengths, is at most 1.

However, we shall also see that $P_c \leq 1$, or, $\mathbb{P}_c \leq 1$ for given code parameters does not in general imply that a subsequence-free, substring-free, pattern-free or substring-pattern-free permutation code with these parameters exists. In some of these cases there is no analogue of Proposition~\ref{mcmillan}.

\section{A generalisation of the LYM and Kraft inequalities}

We shall state the Bollob\'as-Lubell-Yamamoto-Meshalkin inequality, also known as the LYM inequality~\cite{bol1965, lubell, meshalkin, yamamoto}, see also~\cite{jukna} (Ch.~8). Consider the set of all subsets of a finite set $[n]$, denoted by $\mathcal{P}([n])$, ordered by the subset relation $\subseteq$. This is a poset. Denote $[n]^{(k)} := \{A \subseteq [n] : \# A = k\}$. 

\begin{proposition}\label{lym} Suppose $\mathcal{A} \subseteq \mathcal{P}([n])$ is an antichain, i.e.~$A,B\in \mathcal{A}$ and $A \subseteq B$ implies $A = B$. Then $$\sum_{i=0}^n \frac{\# (\mathcal{A} \cap [n]^{(i)})}{{n\choose i}} \leq 1.$$
\end{proposition}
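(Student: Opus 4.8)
The plan is to prove the LYM inequality via the classical permutation-counting argument (Lubell's double-counting method), which is arguably the cleanest approach and fits naturally with the permutation-theoretic theme of this paper. The key idea is to count, in two different ways, the number of maximal chains in the Boolean lattice $\mathcal{P}([n])$ that pass through some member of the antichain $\mathcal{A}$.

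First I would recall that a maximal chain in $\mathcal{P}([n])$ is a sequence $\emptyset = C_0 \subsetneq C_1 \subsetneq \cdots \subsetneq C_n = [n]$ with $\# C_i = i$, and that such chains are in bijection with permutations of $[n]$: reading off the elements in the order they are added gives a permutation, so there are exactly $n!$ maximal chains. Next I would observe that each fixed set $A$ with $\# A = i$ lies on exactly $i!\,(n-i)!$ maximal chains, since we may order the elements of $A$ freely (giving the chain from $\emptyset$ up to $A$) and independently order the remaining $n-i$ elements (giving the chain from $A$ up to $[n]$).

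The crucial step is to argue that no maximal chain passes through two distinct members of $\mathcal{A}$. Suppose a chain met $A, B \in \mathcal{A}$ with $A \neq B$; since any two sets on a common chain are comparable, we would have $A \subseteq B$ or $B \subseteq A$, contradicting the antichain hypothesis. Hence the chains through distinct members of $\mathcal{A}$ are disjoint, and summing the counts gives
$$\sum_{A \in \mathcal{A}} (\#A)!\,(n - \#A)! \;\leq\; n!,$$
the total number of maximal chains. Grouping the members of $\mathcal{A}$ by their size $i$ and noting $\#(\mathcal{A} \cap [n]^{(i)})$ sets each contribute $i!\,(n-i)!$, dividing through by $n!$ and using ${n \choose i} = n!/(i!\,(n-i)!)$ yields the claimed inequality.

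I expect the main obstacle to be purely expository rather than mathematical: one must state precisely the bijection between maximal chains and permutations and justify the per-set chain count $i!\,(n-i)!$ without hand-waving. The antichain disjointness step is the conceptual heart but is short once comparability-on-a-chain is invoked. An alternative would be a probabilistic phrasing (choose a uniformly random maximal chain and bound $\sum_A \Pr[A \in \text{chain}]$ by $1$ since the events are disjoint), which is essentially the same argument and would be worth mentioning if the paper later reuses the chain-counting idea in its generalisation to graded posets.
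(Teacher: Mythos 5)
Your proof is correct, and it is the classical Lubell double-counting argument: each step --- the bijection between maximal chains and permutations of $[n]$, the count of $i!\,(n-i)!$ maximal chains through a fixed $i$-set, and the pairwise disjointness of the chain families through distinct members of $\mathcal{A}$ --- is sound, and dividing by $n!$ gives exactly the stated inequality. It is, however, a genuinely different route from the paper's. The paper never proves Proposition~\ref{lym} in isolation; it derives it as a special case of Theorem~\ref{graded}, whose proof runs by induction on the largest rank $k$ occurring in the antichain: the top layer $A^{(k)}$ is replaced by its lower shadow in level $k-1$, and the key step is the Local LYM inequality (Lemma~\ref{local lym}), proved by counting edges of the biregular bipartite graph induced by two consecutive levels of the Hasse diagram. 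The trade-off is this: your argument is shorter and more transparent for the Boolean lattice, but it leans on global structure --- all maximal chains have the same length, and the number of maximal chains through an element depends only on its rank. The paper's local-shadow argument uses only biregularity between consecutive levels, which is exactly its hypothesis of level-regularity, and it passes to posets of infinite height (e.g.~the prefix poset $S_2^*$ behind the Kraft inequality, Proposition~\ref{kraft}) by truncation. Your chain-counting proof can in fact be adapted to level-regular graded posets of finite height --- edge-counting between consecutive levels shows that the fraction of maximal chains (counted with multiplicity in the Hasse multigraph) passing through an element of rank $i$ is exactly $1/\#P^{(i)}$ --- but making that precise essentially re-proves, level by level, the identity the paper's lemma encapsulates; so for the generalisation the paper's approach is the more economical one, while as a proof of Proposition~\ref{lym} alone yours is perfectly good.
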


That, is, the sum of densities of $\mathcal{A}$ in each level, summed over all levels, is at most 1.

Now we shall prove a common generalisation of the Proposition~\ref{kraft} for prefix-free codes and Proposition~\ref{lym}. As consequences we obtain some analogues of Proposition~\ref{kraft}, namely $P_c \leq 1,$ and, $\mathbb{P}_c \leq 1$, for some of prefix-free, subsequence-free, substring-free, pattern-free and substring-pattern-free permutation codes. We remark that our theorem follows from the so-called AZ identity for general finite posets~\cite{ahlswede} but our proof here is self-contained and more suited for the applications we have in mind. We also remark that our proof follows closely the known proof of the Proposition~\ref{lym} via the Local LYM inequality, and our framework of level-regular graded posets, to be defined, is chosen so that this proof still works, while being general enough for the corollaries we have in mind.

We shall also investigate when a converse statement such as Proposition~\ref{mcmillan} can hold. For example for Proposition~\ref{lym} such a converse statement does not hold in general. 

\subsection{Level-regular graded posets}

We shall consider a special kind of graded posets. A \emph{partially ordered set} or a \emph{poset} is a set $P$ together with a binary relation $ \leq $ that is reflexive, transitive and antisymmetric, i.e.~for all $a\in P$, $a\leq a$, for all $a,b,c\in P$, if $a\leq b$ and $b\leq c$ then $a\leq c$ and for all $a,b\in P$, if $a \leq b$ and $b\leq a$ then $a=b$. Write $a<b$ for $a\leq b$ and $a\not= b$. We say \emph{$b$ covers $a$} if $a < b$ and there is no $c\in P$ with $a<c$ and $c<b$. An element $a\in P$ is \emph{minimal} if there is no $b\in P$ with $b < a$. A \emph{graded poset} is a poset $P$ with a \emph{rank function} $\rho : P \rightarrow \mathbb{N}_0$ satisfying $\rho(c) = 0$ for all minimal $c\in P$, and, $\rho(b) = \rho(a) + 1$ if $b$ covers $a$, and, if $a < b$ then $\rho(a) < \rho(b)$. 

A \emph{directed (multi)graph} $G=(V,E)$ is a set $V,$ called its \emph{vertex set}, together with a multiset $E \subseteq V\times V$ of ordered pairs of vertices, called its \emph{edge set} (there may be multiple edges $(u,v)$ for fixed $u,v\in V$). An element $v\in V$ is a \emph{vertex} and an element $e\in E$ is an \emph{edge}. An edge $e=(u,v)$ is \emph{directed from $u$ to $v$} or goes from $u$ to $v$. A vertex $v$ is a \emph{neighbour} of a vertex $u$ if there is an edge $(u,v)$ or $(v,u)$. A directed graph is \emph{weakly connected}, if its underlying undirected graph is connected, i.e.,~without regard to directions of edges, one can walk from any vertex to any other vertex along edges (we can walk from a vertex to any of its neighbours). The \emph{up-degree} of a vertex $u$ is $\# \{e\in E \mid \exists v\in V:\; e=(u,v)\},$ i.e.~the number of edges directed from $u$, and the \emph{down-degree} of a vertex $v$ is $\# \{e\in E \mid \exists u\in V:\; e=(u,v)\},$ i.e.~the number of edges directed to $v$. Sometimes the up- or down-degree is just called degree. For a graph $G=(V,E)$ and a subset $V'\subseteq V$, the graph $G'=(V',E\cap (V'\times V'))$ is called an \emph{induced subgraph} of $G$, i.e.~we keep all edges with both endpoints in $V'$ and only them; then $G'$ is \emph{induced by} $V'$.

The \emph{Hasse diagram} of a graded poset $P$ is a directed graph with vertex set $P$, and an edge from $a$ to $b$ if and only if $b$ covers $a$; it is drawn with elements of the same rank on the same horizontal level and elements of higher ranks higher. 

\begin{example}
See Figure~1 for the Hasse diagram of the poset of subsets of $\{1,2\}$ with the subset relation $\subseteq$.

\begin{figure}[h]
\begin{center}
\begin{tikzpicture}
  \tikzset{edge/.style = {->,> = latex'}}
  \node (max) at (0,1) {$\{1,2\}$};
  \node (d) at (-1,0) {$\{1\}$};
  \node (f) at (1,0) {$\{2\}$};
  \node (min) at (0,-1) {$\emptyset$};
\draw[edge,line width=1pt] (min) to (d);
\draw[edge,line width=1pt] (min) to (f);
\draw[edge,line width=1pt] (d) to (max);
\draw[edge,line width=1pt] (f) to (max);
 % \draw[preaction={draw=white, -,line width=6pt}] (a) -> (e) -> (c);
\end{tikzpicture}
\end{center}
\caption[Subsets]{The Hasse diagram of the poset of subsets of $\{1,2\}$.}
\end{figure}
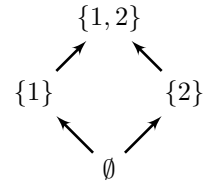

\end{example}

Let us say a graded poset $P$ is \emph{level-regular} if the bipartite (multi)graph induced by any two consecutive levels of its Hasse diagram is biregular --- that is, all elements on the same level, i.e.~with the same rank, are covered by an equal number of elements, and also cover an equal (perhaps different) number of elements (with multiplicity).

Let $\mathcal{A} \subseteq P$ be any set of elements of the same rank, i.e.~$\rho(a) = \rho(b)$ for all $a,b\in\mathcal{A}$. Then its \emph{upper shadow} $\delta^+(\mathcal{A}) := \{b \mid \exists a \in \mathcal{A} : b\; \text{covers} \; a\}$ is the set of all elements covering some element of $\mathcal{A}$, and its \emph{lower shadow} $\delta^-(\mathcal{A}) := \{b \mid \exists a\in\mathcal{A} : a\; \text{covers} \; b\}$ is defined analogously.

An \emph{antichain} is a subset $A\subseteq P$ whose elements are pairwise incomparable, i.e.~$a\not< b$ for all $a,b\in A$.

\subsection{A common generalisation of the LYM and Kraft inequalities}

Our main Theorem is the following generalisation of Proposition~\ref{lym} and Proposition~\ref{kraft}. 

\begin{theorem}\label{graded} Let $P$ be a level-regular graded poset and let $A \subseteq P$ be an antichain. Denote $P^{(i)} := \{p\in P \mid \rho(p)=i\}$ and $A^{(i)} := \{a\in A \mid \rho(a)=i\}$. Assume all levels are finite, i.e.~$\# P^{(i)} < \infty$ for each $i\in\mathbb{N}_0$. Define the \emph{LYM number} of the antichain as $$ L_A := \sum_{i=0}^\infty \frac{\# A^{(i)}}{\# P^{(i)} }.$$ Then 
$$L_A \leq 1.$$
\end{theorem}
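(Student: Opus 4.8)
The plan is to mimic the classical LYM proof via the Local LYM inequality, exploiting level-regularity exactly where the original argument uses the regularity of the Boolean lattice. The global strategy has three stages: first establish a local shadow inequality between consecutive levels; second, use it to show that weighted counts (densities) are ``pushed up'' without loss when we replace an antichain level by its upper shadow; and third, iterate/telescope to collapse everything into a single top level, where the density sum is trivially at most $1$.

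First I would prove the \emph{Local LYM inequality} for a level-regular graded poset: if $\mathcal{B} \subseteq P^{(i)}$ is any set on level $i$, then
$$ \frac{\# \mathcal{B}}{\# P^{(i)}} \leq \frac{\# \delta^+(\mathcal{B})}{\# P^{(i+1)}}. $$
The proof is a double-counting of the edges of the Hasse diagram between levels $i$ and $i+1$ that emanate from $\mathcal{B}$. Let $u_i$ be the common up-degree of level-$i$ elements and $d_{i+1}$ the common down-degree of level-$(i+1)$ elements (these exist precisely by level-regularity). Counting edges out of $\mathcal{B}$ gives exactly $u_i \cdot \#\mathcal{B}$; every such edge lands in $\delta^+(\mathcal{B})$, and each element of $\delta^+(\mathcal{B})$ receives at most $d_{i+1}$ such edges, so $u_i \# \mathcal{B} \leq d_{i+1} \# \delta^+(\mathcal{B})$. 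The handshake identity $u_i \# P^{(i)} = d_{i+1} \# P^{(i+1)}$ (total edge count between the two levels, counted from each side) then converts this into the claimed density inequality after dividing through. This is the step I expect to be the crux, since it is the only place the hypothesis is used and one must be careful that multiplicities in the edge multiset are consistently counted on both sides.

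Next I would run the standard ``shadow-pushing'' argument. Let $m$ be the largest rank attained by an element of $A$ (finite since only finitely many levels are nonempty for a nonempty antichain, or handle $A = \emptyset$ trivially). I define a sequence of sets by starting from the bottom nonempty level and successively merging each level's contribution upward: replace the part of $A$ on level $i$ by its upper shadow and union with the part of $A$ on level $i+1$. Because $A$ is an antichain, $\delta^+(A^{(i)})$ is disjoint from $A^{(i+1)}$ (an element of $A^{(i+1)}$ covering an element of $A^{(i)}$ would violate incomparability), so unions of shadows with the next antichain level behave additively on cardinalities at the moment of merging. Applying the Local LYM inequality at each merge shows that the partial density sum never decreases when we push a level up.

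Finally, iterating this from the bottom level up to level $m$ collapses the entire antichain into a single subset $\mathcal{C} \subseteq P^{(m)}$ with
$$ L_A = \sum_{i=0}^\infty \frac{\# A^{(i)}}{\# P^{(i)}} \leq \frac{\# \mathcal{C}}{\# P^{(m)}} \leq 1, $$
the last inequality holding because $\mathcal{C} \subseteq P^{(m)}$. I would write the pushing step as a clean induction on the number of nonempty levels remaining below $m$, with the inductive hypothesis being that the density sum over the already-processed lower levels is bounded by the density of the accumulated shadow on the current level. The antichain condition is what guarantees that at each merge the two contributing sets are disjoint, so no density is double-counted; this disjointness, together with Local LYM, is all that the induction needs.
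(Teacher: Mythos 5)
Your core argument is, up to direction, the paper's own proof: the paper establishes the same Local LYM inequality by the same edge-counting between consecutive levels, and then pushes the antichain's \emph{top} level \emph{down} via lower shadows (induction on the maximum rank), whereas you push the bottom level up via upper shadows; these are mirror images, and both rest on the same transitivity argument to get disjointness at each merge. On that point, one small expositional caveat: the disjointness you actually need at a general step is between $A^{(i+1)}$ and $\delta^+(B_i)$, where $B_i$ is the \emph{accumulated} set, not merely $\delta^+(A^{(i)})$; the correct justification is that every element of $B_i$ lies above some element of $A$ (by induction along the merges), so an element of $A^{(i+1)} \cap \delta^+(B_i)$ would be strictly above another element of $A$, contradicting the antichain property. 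Your parenthetical only covers adjacent antichain levels, but the repair is the same one-line transitivity argument the paper uses.

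The genuine gap is your claim that $m := \max\{\rho(a) \mid a \in A\}$ exists, i.e., that ``only finitely many levels are nonempty for a nonempty antichain.'' This is false in an infinite graded poset, and it fails precisely in the paper's motivating application: in the poset $S_2^*$ of binary strings ordered by the prefix relation, the set $\{0,\, 10,\, 110,\, 1110,\, \ldots\}$ is an antichain with one element on every positive level (it is the codeword set of a perfectly good prefix-free code, with LYM number $\sum_{i \geq 1} 2^{-i} = 1$). For such an antichain your bottom-up iteration has no top level at which to terminate, so the proof as written does not cover exactly the cases the theorem is mainly intended for. The repair is the truncation step the paper performs: for each $N \in \mathbb{N}$, restrict $P$ to levels $0,\ldots,N$ (still a level-regular graded poset with finite levels) and restrict $A$ to those levels (still an antichain, now with finitely many nonempty levels); your finite argument then gives $\sum_{i=0}^{N} \# A^{(i)} / \# P^{(i)} \leq 1$ for every $N$, and letting $N \rightarrow \infty$ yields $L_A \leq 1$. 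With that addition (and the strengthened disjointness claim above), your proof is complete and essentially coincides with the paper's.
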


\begin{proof}
First let us assume that the poset $P$ is finite.
Then the LYM number is a finite sum. We shall proceed by induction on $k := \max\{\rho(a) \mid a\in A\}$. If $k=0$ then $\# A^{(i)} = 0$ for all $i>0$ and $$\sum_{i=0}^\infty \frac{\# A^{(i)}}{\# P^{(i)} } = \frac{\# A^{(0)}}{\# P^{(0)}} \leq 1,$$ as needed. Let us define $A' := \{a \in A \mid \rho(a) < k\} \cup \{p \in P \mid \rho(p)=k-1, \; p\in \delta^-(A)\}$. We shall prove that, on replacing $A$ by $A'$, its LYM number does not decrease, i.e.~$L_{A} \leq L_{A'}$, and that $A'$ is still an antichain. The claim now follows by induction, since $L_{A'} \leq 1$ by the induction hypothesis.

Let $a,b\in A'$. We shall prove that $a\not< b$. If $a,b\in A$ then $a\not< b$ as $A$ is an antichain. If $a,b\in P$ with $\rho(a) = \rho(b)$ then also $a\not< b$ by the definition of a graded poset. Suppose for a contradiction that $a < b$. The only way it might happen is with $a\in A$ and $b\in \delta^-(A)$ with $\rho(b) = k-1$. But since $b$ is in the lower shadow of $\{a \in A \mid \rho(a) = k\}$, there exists $c\in A$ with $b < c$. By transitivity, $a < c$ with $a,c\in A$ --- a contradiction with $A$ being an antichain. Hence $A'$ is an antichain.

Note that $\delta^-(A^{(k)}) \cap A^{(k-1)} = \emptyset$, as $A$ is an antichain. Hence, to prove that $L_{A'} \geq L_A$, it is enough to show that 
\begin{lemma}\label{local lym}
$$\frac{\# \delta^-(A^{(k)}) }{\# P^{(k-1)}} \geq \frac{\# A^{(k)}}{\# P^{(k)}}.$$
\end{lemma}
This is the analogue of what is known as the Local LYM inequality, which reads that shadows have greater density.
\begin{proof}
We shall prove the Lemma by degree considerations of the Hasse diagram of levels $k$ and $k-1$. 
Let the down-degree, i.e.~the number of elements it covers, of each $v\in P^{(k)}$ be $d$, and the up-degree, i.e.~the number of elements covering it, of each $w \in P^{(k-1)}$ be $u$. The number of edges between the sets $P^{(k)}$ and $P^{(k-1)}$ is $d \cdot \# P^{(k)} = u \cdot \# P^{(k-1)}$. The number of edges in the subgraph induced by $A^{(k)}$ and $\delta^-(A^{(k)})$ is equal to $d\cdot \# A^{(k)}$ on the one hand and at most $u \cdot \# \delta^-(A^{(k)})$ on the other hand --- a vertex of the original has $d$ neighbours in the lower shadow and a vertex of the lower shadow has at most $u$ neighbours in the original. Hence $$\frac{\# P^{(k)}}{\# P^{(k-1)}} = \frac{u}{d} \geq \frac{\# A^{(k)}}{\# \delta^-(A^{(k)})},$$
and rearranging proves the Lemma.
\end{proof}
This proves the Theorem if $P$ is finite. To prove the infinite case, restrict the poset to levels up to $N$, for every $N\in\mathbb{N}$, and then by the finite case we have $$L_A = \sum_{i=1}^\infty \frac{\# A^{(i)}}{\# P^{(i)} } = \lim_{N\rightarrow\infty}\sum_{i=1}^N \frac{\# A^{(i)}}{\# P^{(i)} } \leq 1.$$
\end{proof}

\section{Failure of the converse McMillan theorem}

We may ask about the analogue of Proposition~\ref{mcmillan} in this general setting. 

\begin{question}\label{false}
Let $P$ be an (infinite or finite) level-regular graded poset. Assume that all levels are finite, i.e.~$\# P^{(i)} < \infty$ for each $i\in\mathbb{N}_0$. Let $a_i\in\mathbb{N}_0$ for each $i\in\mathbb{N}_0$. Assume $$\sum_{i=1}^\infty \frac{a_i}{\# P^{(i)} } \leq 1.$$ Is it true that then there exists an antichain $A\subseteq P$ with $\# A^{(i)} = a_i$ for each $i$?
\end{question}

The answer is in general ``No'' but ``Yes'' for example if the Hasse diagram is a tree. The next Theorem gives the general form of our counterexample to Question~\ref{false}. We can hope for a counterexample only when Lemma~\ref{local lym} is not tight, i.e.~when
${\# \delta^-(A^{(k)}) }/{\# P^{(k-1)}} > {\# A^{(k)}}/{\# P^{(k)}}.$

\begin{theorem}\label{counter} Consider two consecutive levels $P^{(i)}$ and $P^{(i+1)}$ of a level-regular graded poset $P$. Suppose that the equal up-degrees from level $i$ to level $i+1$ are $u > 1$ and the equal down-degrees from level $i+1$ to level $i$ are $d > 1$, i.e.~each element of $P^{(i)}$ is covered by $u$ elements of $P^{(i+1)}$ and that each element of $P^{(i+1)}$ covers $d$ elements of $P^{(i)}$. Assume that the graph, induced by levels $i$ and $i+1$ of the Hasse diagram, is weakly connected. Further assume that the greatest common divisor $\gcd(\# P^{(i)}, \# P^{(i+1)}) =: g > 1$.

Define $a_i :=  \frac{g-1}{g}\# P^{(i)};\; a_{i+1} :=  \frac1g \# P^{(i+1)};\; a_j := 0$ for $j\not= i,i+1,$ and note that they are integers. Then $$\sum_{i=1}^\infty \frac{a_i}{\# P^{(i)} } \leq 1$$ but there is no antichain $A$ with $\# A^{(j)} = a_j$ for all $j\in \mathbb{N}_0$. 
\end{theorem}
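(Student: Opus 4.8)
The plan is to argue by contradiction: assume an antichain $A$ with the prescribed level sizes exists, show that this forces the Local LYM inequality (Lemma~\ref{local lym}) to be tight for $A^{(i+1)}$, and then use weak connectivity of the graph induced by levels $i$ and $i+1$ to derive a contradiction. Before that I would dispatch the two routine verifications stated in the theorem. Since $g=\gcd(\# P^{(i)},\# P^{(i+1)})$ divides both level sizes, $a_i=\frac{g-1}{g}\# P^{(i)}=(g-1)\cdot\frac{\# P^{(i)}}{g}$ and $a_{i+1}=\frac1g\# P^{(i+1)}$ are non-negative integers, and $\sum_j a_j/\# P^{(j)}=\frac{g-1}{g}+\frac1g=1\leq 1$, so the hypothesis of Question~\ref{false} is satisfied.

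Now suppose an antichain $A$ exists with $\# A^{(i)}=a_i$, $\# A^{(i+1)}=a_{i+1}$, and $A^{(j)}=\emptyset$ for $j\neq i,i+1$. Write $B:=A^{(i+1)}$. Because levels $i$ and $i+1$ are consecutive, for $a\in P^{(i)}$ and $b\in P^{(i+1)}$ one has $a<b$ precisely when $b$ covers $a$ (any intermediate element would have rank strictly between $i$ and $i+1$). Hence the antichain condition is exactly that no element of $B$ covers an element of $A^{(i)}$, i.e.\ $A^{(i)}\cap\delta^-(B)=\emptyset$. Since $\delta^-(B)\subseteq P^{(i)}$, this gives $a_i=\# A^{(i)}\leq \# P^{(i)}-\#\delta^-(B)$, which rearranges to $\#\delta^-(B)\leq\frac1g\# P^{(i)}$. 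On the other hand, Lemma~\ref{local lym} applied to $B=A^{(i+1)}$ yields $\#\delta^-(B)/\# P^{(i)}\geq \# B/\# P^{(i+1)}=1/g$, that is $\#\delta^-(B)\geq\frac1g\# P^{(i)}$. The two bounds force $\#\delta^-(B)=\frac1g\# P^{(i)}$, so Local LYM is tight for $B$.

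The crux is to extract the combinatorial content of this tightness and combine it with connectivity. Inspecting the edge-counting proof of Lemma~\ref{local lym}, equality means that every $w\in\delta^-(B)$ has all $u$ of its up-neighbours inside $B$; otherwise the bound $u\cdot\#\delta^-(B)$ on the number of edges between $B$ and $\delta^-(B)$ would be strict. Combining this with the defining property that every element covered by some $b\in B$ already lies in $\delta^-(B)$, I claim no edge of the induced bipartite graph joins $B\cup\delta^-(B)$ to its complement: down-edges from $B$ land in $\delta^-(B)$; up-edges from $\delta^-(B)$ land in $B$ by tightness; and any edge from $w'\in P^{(i)}\setminus\delta^-(B)$ up to some $b\in B$ would force $w'\in\delta^-(B)$, a contradiction. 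Thus $B\cup\delta^-(B)$ is a union of connected components of the graph on levels $i$ and $i+1$. As this graph is weakly connected and $B\neq\emptyset$ (because $g>1$ makes $a_{i+1}>0$), we get $B\cup\delta^-(B)$ equal to the whole vertex set, so $B=P^{(i+1)}$. But then $\# B=\# P^{(i+1)}$, contradicting $\# B=a_{i+1}=\frac1g\# P^{(i+1)}<\# P^{(i+1)}$, where the strict inequality holds exactly because $g>1$. This contradiction shows no such antichain exists.

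I expect the main obstacle to be the middle step: recognising that the antichain constraint and the shadow inequality squeeze $\#\delta^-(B)$ from both sides and thereby pin down tightness of Local LYM. Once tightness is in hand, the connectivity argument is the decisive but more transparent part, and it is precisely there that the hypotheses $g>1$ and weak connectivity do the work (the assumptions $u>1$ and $d>1$ guarantee that the two levels genuinely branch, which is what makes Local LYM non-tight in general and hence leaves room for a counterexample).
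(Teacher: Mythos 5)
Your proof is correct and takes essentially the same approach as the paper: both rest on the observation that weak connectivity of the bipartite graph on levels $i$ and $i+1$ forbids a proper nonempty subset $B \subseteq P^{(i+1)}$ from having a lower shadow of minimal density $\#\delta^-(B) = \frac{1}{g}\#P^{(i)}$, established by the same edge count that underlies Lemma~\ref{local lym}. The only difference is organizational: the paper proves the strict inequality $\#\delta^-(A^{(i+1)})/\#P^{(i)} > \#A^{(i+1)}/\#P^{(i+1)}$ directly for every candidate top part (via the average up-degree in the shadow being strictly less than $u$), whereas you assume the antichain exists, squeeze $\#\delta^-(B)$ to equality from both sides, and then read off the disconnection --- the contrapositive of the same argument.
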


\begin{proof}
Fix any subset $A^{(i+1)} \subseteq P^{(i+1)}$ with $\# A^{(i+1)} = a_{i+1} = \frac{1}{g}\# P^{(i+1)}$. Consider the graph $G_i=(V_i,E_i)$ induced by $V_i := A^{(i+1)}\cup\delta^-(A^{(i+1)})$. It is a bipartite graph with parts $A^{(i+1)}$ and $\delta^-(A^{(i+1)})$, i.e.~all edges $e\in E_i$ are of the form $(v,w)$ with $v\in \delta^-(A^{(i+1)})$ and $w\in A^{(i+1)}$. The average of down-degrees of vertices in $A^{(i+1)}$ is equal to $d$ since all neighbours of vertices of $A^{(i+1)}$ lying in the $i$-level of the Hasse diagram of $P$ are contained in the graph $G_i$. The average of up-degrees of vertices in $\delta^-(A^{(i+1)})$, however, is strictly less that $u$. To see this, note that all degrees are at most $u$. But there is a vertex of degree less than $u$ since not all neighbours of all vertices of $\delta^-(A^{(i+1)})$ in the graph induced by $P^{(i)}\cup P^{(i+1)}$ (where all these degrees are $u$) lie in the set $A^{(i+1)}$. Indeed, otherwise the graph induced by $P^{(i)}\cup P^{(i+1)}$ would not be weakly connected, with $\delta^-(A^{(i+1)})\cup A^{(i+1)}$ being disconnected from the rest of the graph. Hence, $$\frac{\# \delta^-(A^{(i+1)}) }{\# P^{(i)}} > \frac{\# A^{(i+1)}}{\# P^{(i+1)}}.$$ Thus $a_i > \# (P^{(i)} \backslash \delta^-(A^{(i+1)}))$ and we cannot pick $a_i$ elements to our antichain $A$ on level $P^{(i)}$. Since the choice of $a_{i+1}$ elements comprising $A^{(i+1)}$ was arbitrary, there is no antichain $A$ with the required properties.
\end{proof}

\begin{corollary}
The Kraft inequality~(\ref{kraft eqn}), $K_c \leq 1,$ holds for a classical code $c : S_1 \rightarrow S_2^*,\; \# S_2 = r,$ if it is prefix-free, subsequence-free or substring-free. For $r\geq 2$ there exist parameter sequences $(a_0,a_1,\ldots)$ for which $K_c \leq 1$ but there is no subsequence-free or substring-free code $c : S_1 \rightarrow S_2^*$.
\end{corollary}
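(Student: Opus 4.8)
The plan is to realise each freeness notion as an antichain condition in a suitable level-regular graded poset on the set $S_2^*$ of all finite strings over $S_2$, graded by string length, and then to read the first assertion off Theorem~\ref{graded} and the second off Theorem~\ref{counter}. In all three cases the ground set is $S_2^*$ with rank $\rho(w)$ equal to the length of $w$, so $P^{(i)} = S_2^i$ and $\# P^{(i)} = r^i$; hence for an antichain $A$ with $\# A^{(i)} = a_i$ one has $L_A = \sum_{i} a_i/r^i = K_c$, and only the order relation differs between the three notions. Under the prefix order a prefix-free code is an antichain, under the subsequence order a subsequence-free code is an antichain, and under the substring order a substring-free code is an antichain (distinct codewords of equal length are automatically incomparable in each case, so non-singularity of $c$ matches the antichain condition).

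Next I would pin down the cover structure and check level-regularity. For the prefix order the Hasse diagram is the $r$-ary tree: a length-$i$ string is covered by its $r$ single-symbol extensions and covers only its length-$(i-1)$ prefix, giving up-degree $r$ and down-degree $1$. For the subsequence order the covers of a length-$i$ string are its one-symbol insertions and a length-$(i+1)$ string covers its one-symbol deletions; here the simple Hasse diagram fails to be biregular (for instance $00$ has the single lower neighbour $0$ whereas $01$ has two), so I would adopt the multigraph reading permitted by the clause ``(with multiplicity)'' in the definition of level-regularity, counting each insertion and each deletion as a distinct edge. This yields up-degree $r(i+1)$ out of level $i$ and down-degree $i+1$ into level $i+1$, which is biregular. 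For the substring order the covers are the prepend/append of one symbol (up) and the deletion of the first/last symbol (down); the multigraph then has up-degree $2r$ and down-degree $2$, again biregular. In every case $d\cdot\# P^{(k)} = u\cdot\# P^{(k-1)}$ holds, so each poset is level-regular.

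For the inequality I would then simply invoke Theorem~\ref{graded}: since the counting in Lemma~\ref{local lym} uses only the regular up- and down-degrees and the edge-balance identity, it goes through verbatim for the multigraphs above, and therefore $K_c = L_A \le 1$ for prefix-free, subsequence-free and substring-free codes alike. For the converse failure I would apply Theorem~\ref{counter} to the two lowest nontrivial levels $i=1$ and $i+1=2$ of the subsequence poset and, separately, of the substring poset. In both posets the relevant degrees are $u>1$ and $d>1$ (indeed $u=2r$, $d=2$), the bipartite graph between the $r$ symbols and the $r^2$ length-two strings is weakly connected (each symbol is joined to every length-two string containing it, and the strings $s_j s_{j+1}$ link consecutive symbols), and $g=\gcd(r,r^2)=r>1$ for $r\ge 2$. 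Theorem~\ref{counter} then outputs the parameter sequence $a_1=r-1$, $a_2=r$, and $a_j=0$ otherwise, for which $K_c=(r-1)/r + r/r^2 = 1\le 1$ while no antichain with these level sizes exists; translating back, no subsequence-free (respectively substring-free) code with these parameters exists.

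The step needing the most care is the multigraph interpretation of the Hasse diagram: the naive simple Hasse diagrams of the subsequence and substring posets are not biregular, and one must check that both Theorem~\ref{graded} (via Lemma~\ref{local lym}) and Theorem~\ref{counter} remain valid once edges are counted with multiplicity --- which they do, since both arguments are pure degree-counting. A secondary point is the weak-connectivity check between levels $1$ and $2$, which is routine. Finally I would note that the prefix case is rightly absent from the converse-failure statement: its Hasse diagram is a tree with down-degree $d=1$, so the hypothesis $d>1$ of Theorem~\ref{counter} is never met, in agreement with McMillan's Proposition~\ref{mcmillan} continuing to hold for the classical notion of prefix.
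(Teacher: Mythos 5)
Your proposal is correct and follows essentially the same route as the paper: realise each freeness notion as an antichain in the graded poset of strings ordered by prefix/subsequence/substring, verify level-regularity via the multigraph (with-multiplicity) reading of the Hasse diagram, apply Theorem~\ref{graded} for the inequality, and apply Theorem~\ref{counter} to levels $1$ and $2$ (where $\gcd(r,r^2)=r>1$ and the induced bipartite graph is weakly connected) for the failure of the converse. Your treatment is in fact somewhat more explicit than the paper's proof --- spelling out the degree counts, the resulting parameter sequence $a_1=r-1$, $a_2=r$, and why the prefix case (a tree, $d=1$) escapes Theorem~\ref{counter} --- but these are elaborations of the same argument, not a different one.
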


\begin{proof}
Consider the poset of all possible codewords, with the prefix, subsequence or substring relation. It is a graded poset: the rank of a codeword is its length. It is level-regular: the number of codewords $c\in S_2^{l+1}$ covering a codeword of length $l$ is $r = \# S_2$ for the prefix, $(l+1)r$ for the subsequence and $2r$ for the substring relation, and a codeword of length $l$ covers 1 codeword of $S_2^{l-1}$ for the prefix, $l$ codewords for the subsequence and $2$ codewords for the substring relation (with multiplicity: the addition or removal of a different symbol or in a different position may produce equal outputs). Hence Theorem~\ref{graded} proves Kraft inequality~(\ref{kraft eqn}) for these codes.

Conversely, for the subsequence and substring relations, note that level 1 of the poset has $r$ elements, level 2 has $r^2$ elements, with $\gcd(r,r^2) = r > 1$. The graph induced by levels 1 and 2 of the Hasse diagram is weakly connected: any string can be transformed into any other string of the same length by alternate adding and removing of symbols at the beginning or end. Hence Theorem~\ref{counter} shows the analogue of Proposition~\ref{mcmillan} fails in these cases.
\end{proof}

\begin{corollary}
We have $P_c \leq 1$ for \{prefix-, subsequence- or substring-\}free permutation codes $c : S \rightarrow T_k$  and $\mathbb{P}_c \leq 1$ for \{prefix-, subsequence-, substring-, pattern- or substring-pattern-\}free permutation codes $c : S \rightarrow \mathbb{T}_k$. For all $k\geq 3,$ there are parameter sequences $(a_0,a_1,\ldots)$ with $P_c \leq 1$ but no subsequence-free or substring-free permutation codes $c : S \rightarrow {T}_k$, and, parameter sequences $(a_0,a_1,\ldots)$ with $\mathbb{P}_c \leq 1$ but no pattern-free or substring-pattern-free codes $c : S \rightarrow \mathbb{T}_k$.
\end{corollary}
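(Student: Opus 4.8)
The plan is to apply the two main results, Theorem~\ref{graded} and Theorem~\ref{counter}, to suitable posets built from the codeword spaces $T_k$ and $\mathbb{T}_k$, exactly in parallel with the preceding corollary for classical codes. The whole task reduces to (i) checking that each of the relevant posets is a level-regular graded poset so that Theorem~\ref{graded} yields the inequality $P_c \leq 1$ or $\mathbb{P}_c \leq 1$, and (ii) verifying the three hypotheses of Theorem~\ref{counter} (equal up-/down-degrees $u,d>1$, weak connectivity of two consecutive levels, and a nontrivial gcd $g>1$ of level sizes) to produce the counterexamples.

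\medskip

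\textbf{The forward inequalities.} For the poset on $T_k$ with the prefix, subsequence or substring relation, and for the poset on $\mathbb{T}_k$ with prefix, subsequence, substring, pattern or substring-pattern relation, I would first confirm that rank $=$ length $l$ gives a valid rank function, so each is graded with $P^{(l)} = T_k^l$ (size $\binom{k}{l}l!$) or $P^{(l)} = \mathbb{S}_l$ (size $l!$). The key work is level-regularity: for each relation I must compute the up-degree (number of length-$(l{+}1)$ codewords covering a fixed length-$l$ codeword) and the down-degree (number of length-$(l{-}1)$ codewords covered), and check each depends only on $l$, not on the particular codeword. For $T_k$ this is a direct count — e.g. for the subsequence relation on a length-$l$ partial permutation one inserts any of the $k-l$ unused symbols in any of the $l+1$ positions, giving up-degree $(k-l)(l+1)$, while deleting any one of the $l$ symbols gives down-degree $l$; the substring case restricts insertions/deletions to the two ends. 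For $\mathbb{T}_k$ the pattern relation requires care because inserting a symbol into a full permutation in $\mathbb{S}_l$ means choosing a rank for the new entry and renormalising, but by symmetry this count is again independent of the chosen permutation. Once level-regularity holds in each case, Theorem~\ref{graded} immediately gives $L_A = P_c \leq 1$ (resp. $\mathbb{P}_c\leq 1$), since the antichain here is the image of the code (prefix-freeness, subsequence-freeness, etc.\ is precisely the statement that no codeword lies below another, i.e.\ the codewords form an antichain) and $\# A^{(l)} = a_l$.

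\medskip

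\textbf{The counterexamples.} For the negative direction I would apply Theorem~\ref{counter} to two consecutive levels. For subsequence- or substring-free codes $c:S\to T_k$ with $k\geq 3$, I would take levels $l=1$ and $l=2$: here $\# P^{(1)} = k$ and $\# P^{(2)} = k(k-1)$, so $g = \gcd(k,k(k-1)) = k > 1$; the up- and down-degrees computed above are $>1$ for $k\geq 3$; weak connectivity of these two levels follows because any single symbol can be turned into any other length-$2$ partial permutation and back by the relevant add/remove moves. For pattern- or substring-pattern-free codes $c:S\to\mathbb{T}_k$ with $k\geq 3$, I would likewise use levels $l=2$ and $l=3$, where $\# P^{(2)} = 2$ and $\# P^{(3)} = 6$ give $g = \gcd(2,6) = 2 > 1$, again checking $u,d>1$ and weak connectivity of the bipartite graph between $\mathbb{S}_2$ and $\mathbb{S}_3$. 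With these hypotheses met, Theorem~\ref{counter} supplies the parameter sequence $a_i = \frac{g-1}{g}\# P^{(i)}$, $a_{i+1} = \frac1g\# P^{(i+1)}$ (zero elsewhere) satisfying the Kraft-type bound while admitting no corresponding antichain, hence no such code.

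\medskip

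The main obstacle I expect is not the application of the two theorems — that is mechanical once the poset data is in hand — but the \emph{degree computations establishing level-regularity}, particularly for the pattern and substring-pattern relations on $\mathbb{T}_k$, where covering is defined only up to order-isomorphism and one must argue by symmetry that the insertion/deletion counts are genuinely uniform across a level. I would also double-check that the prefix relations on $T_k$ and $\mathbb{T}_k$ (whose Hasse diagrams are forests) are correctly excluded from the converse claim, consistent with the remark that the converse holds when the Hasse diagram is a tree; this explains why the corollary asserts failure of McMillan's converse only for the subsequence/substring and pattern/substring-pattern cases, and I would verify that the weak-connectivity hypothesis of Theorem~\ref{counter} indeed fails for the prefix relation, so no contradiction arises there.
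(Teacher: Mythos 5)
Your outline does match the paper's strategy (Theorem~\ref{graded} for the inequalities, Theorem~\ref{counter} at levels $1,2$ of $T_k$ and at a pair of small levels of $\mathbb{T}_k$ for the counterexamples), and your degree counts for $T_k$ are correct --- there, unlike for classical strings, no multiplicity issues arise because a partial permutation has no repeated symbols. The genuine gap is exactly at the point you flag as the main obstacle, and ``by symmetry'' cannot close it: for the pattern relation on $\mathbb{T}_k$, the Hasse diagram as the paper defines it (one edge per covering pair) is \emph{not} biregular, because the down-degrees are genuinely non-constant along a level. Concretely, $123\in\mathbb{S}_3$ covers only the single element $12\in\mathbb{S}_2$ (deleting any of its three entries leaves the pattern $12$), whereas $213\in\mathbb{S}_3$ covers both $12$ and $21$; for the substring-pattern relation, $123$ covers only $12$ while $132$ covers both $12$ and $21$. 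The available symmetries (reverse, complement, inverse) do not act transitively on $\mathbb{S}_l$, and these examples show uniformity actually fails, so no symmetry argument can exist; the up-degrees happen to be constant, but biregularity needs both sides. Consequently Theorem~\ref{graded} does not apply to the poset you propose, and the same defect invalidates your application of Theorem~\ref{counter} to $\mathbb{S}_2,\mathbb{S}_3$, since that theorem also presupposes equal up- and down-degrees.

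This is precisely what the paper's one-line proof is about: it replaces the pattern poset by a refined poset with an auxiliary level between $\mathbb{S}_l$ and $\mathbb{S}_{l+1}$, consisting of the partial permutations $T_{l+1}^{l}$, where going up one first passes from $\sigma\in\mathbb{S}_l$ to the $l+1$ elements of $T_{l+1}^{l}$ having pattern $\sigma$, and then from $\pi\in T_{l+1}^{l}$ to the $l+1$ (subsequence case) or $2$ (substring case) permutations of $\mathbb{S}_{l+1}$ containing $\pi$; in this refined poset every up- and down-degree is honestly constant, a pattern-free (or substring-pattern-free) code is still an antichain supported on the $\mathbb{S}$-levels, and its LYM number is still $\mathbb{P}_c$, so Theorem~\ref{graded} applies. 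An alternative repair, closer to your draft, is to count the edge $(\sigma,\tau)$ with multiplicity equal to the number of occurrences of $\sigma$ in $\tau$, giving constant degrees $(l+1)^2$ and $l+1$; but then you must check that Theorems~\ref{graded} and~\ref{counter} really are proved for such multigraphs (the paper's classical-code corollary already relies on this reading). Note that for the negative half you cannot avoid this second route (or a direct verification that, e.g., $a_2=1,\;a_3=3$ admits no pattern-free code, since any single element of $\mathbb{S}_2$ excludes all of $\mathbb{S}_3$ except one permutation): Theorem~\ref{counter} applied to two consecutive levels of the refined poset puts antichain elements on the auxiliary level, which says nothing about codes. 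Finally, be aware that for the string-prefix and substring relations on $\mathbb{T}_k$ your very first step (confirming that rank $=$ length is a rank function) would fail --- $231$ is minimal under the prefix order, and $132$ covers $1$ under the substring order, jumping two length-levels --- and indeed $\mathbb{P}_c\leq 1$ is false there (the code $\{12,21,132,231,312,321\}$ is string-prefix-free with $\mathbb{P}_c=5/3$); this is a defect of the corollary as stated, which the paper's sketch silently passes over, but your plan should not claim those cases can be verified.
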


\begin{proof}
Sketch. For the pattern- and substring-pattern relations we need a trick to obtain level-regularity: define a new poset, adding a new level between every pair of consecutive levels, i.e.~look at the pattern relation in two steps: first pick a subsequence or substring in $T_k$ and then consider its pattern.
\end{proof}

The subsequence relation has relevance to Ulam codes~\cite{ulam, farnoud2013, gologlu2015}. A permutation code $c : S \rightarrow T_k$ has minimum Ulam distance $d$ if and only if every string $s\in [k]^{k-d+1}$ is a subsequence in at most one codeword. For fixed-length codes this was explored for example in~\cite{Levenshtein, Mathon, farnoud2013, gologlu2015}.

\section{Open problems}

Does the analogue of Kraft inequality~(\ref{kraft eqn}) hold with definitions of `unique decodability', expanding the freeness definitions, for classical or permutation codes? Does there exist an analogue of Huffman coding for these types of codes?

\section{Acknowledgements}

This work was supported in part by the Estonian Research Council through the research grants PUT405, PUT620 and IUT20-57. The authors wish to thank Vitaly Skachek for helpful discussions.

\end{document}